\DeclareMathOperator*{\argmax}{argmax} 
\newtheorem{theorem}{Theorem}
\title{NBA-OMP: Near-field Beam-Split-Aware Orthogonal Matching Pursuit for Wideband THz Channel Estimation}
\name{Ahmet~M.~Elbir$^{\dagger}$, Kumar Vijay Mishra$^{+,\dagger}$ and   Symeon Chatzinotas$^\dagger$
}
\address{	$^{\dagger}$Interdisciplinary Centre for Security, Reliability and Trust, University of Luxembourg, Luxembourg \\
	$^+$United States DEVCOM Army Research Laboratory, Adelphi, USA \\
	\thanks{	This work was supported in part by  the ERC project AGNOSTIC and Horizon project TERRAMETA.}\\
}
\begin{document}
	\maketitle
	
	\begin{abstract}
		The sixth-generation networks envision the terahertz (THz) band as one of the key enabling technologies because of its ultrawide bandwidth. To combat the severe attenuation, the THz wireless systems employ large arrays, wherein the near-field beam-split (NB) severely degrades the accuracy of channel acquisition. Contrary to prior works that examine only either narrowband beamforming or far-field models, we estimate the wideband THz channel via an NB-aware orthogonal matching pursuit (NBA-OMP) approach. We design an NBA dictionary of near-field steering vectors by exploiting the corresponding angular and range deviation. Our OMP algorithm accounts for this deviation thereby \textit{ipso facto} mitigating the effect of NB. Numerical experiments demonstrate the effectiveness of the proposed channel estimation technique for wideband THz systems.
	\end{abstract}

	\begin{keywords}
		Beam split, channel estimation, orthogonal matching pursuit, near-field, terahertz.
	\end{keywords}

	\section{Introduction}
	\label{sec:Introduciton}
	In recent years, there has been considerable research interest in Terahertz (THz) band for sixth-generation  (6G) wireless cellular networks because of the availability of extremely wide bandwidth in THz spectrum. In particular, THz wireless communications is envisioned to provide significant enhancements in data rate ($>100\text{Gb/s}$), extremely low propagation latency ($<1\text{ms}$), and ultra-reliability ($99.999\%$)~\cite{thz_Rappaport2019Jun}. To this end, several signal processing challenges in this band are yet to be addressed including severe path loss arising from fading and molecular absorption, extremely-sparse channel model, very short transmission ranges, and beam-split; see \cite{elbir2022Aug_THz_ISAC,ummimoTareqOverview} and references therein, for details. In general, the severe path loss is addressed by deploying ultra-massive MIMO  architectures, wherein subcarrier-independent (SI) analog beamformers are employed. This is analogous to massive multiple-input multiple-output (MIMO) arrays used in millimeter-wave (mm-Wave) systems~\cite{heath2016overview,elbir2022Nov_Beamforming_SPM}. In a wideband ultra-massive MIMO array, the directions of the generated beams at different subcarriers may point to different directions, or \textit{beam-split}, because the analog beamformers are designed with respect to a single subcarrier~\cite{thz_beamSplit,elbir2022Aug_THz_ISAC}. 
	
	Further, conventional wireless systems operating at sub-$6$ GHz and mm-Wave bands employ far-field plane-wave models. But the THz communications range is much shorter than far-field assumption may not always hold~\cite{ummimoTareqOverview}. In particular, when the transmission range is shorter than the Fraunhofer distance, the wavefront is spherical in the near-field \cite{nf_primer_Bjornson2021Oct}. As a result, channel acquisition algorithms should take into account both direction and range information 
	\cite{elbir2022Aug_THz_ISAC}. 
	
	The existing THz channel estimation techniques mostly rely on far-field signal model~\cite{dovelos_THz_CE_channelEstThz2,elbir2022Jul_THz_CE_FL,elbir_THZ_CE_ArrayPerturbation_Elbir2022Aug,thz_channelEst_beamsplitPatternDetection_L_Dai}. The near-field scenarios in \cite{nf_OMP_Dai_Wei2021Nov,nf_mmwave_CE_noBeamSplit_Cui2022Jan} are limited to mm-Wave and ignore the effect of beam-split. The narrowband processing suggested in \cite{nf_OMP_Dai_Wei2021Nov} considers hybrid (near- and far-field) models. On the other hand, several methods have been proposed to compensate the far-field beam-split for both THz channel estimation~\cite{elbir2022Jul_THz_CE_FL,elbir_THZ_CE_ArrayPerturbation_Elbir2022Aug,dovelos_THz_CE_channelEstThz2,thz_channelEst_beamsplitPatternDetection_L_Dai} and beamforming~\cite{elbir2021JointRadarComm,elbir2022_thz_beamforming_Unified_Elbir2022Sep,delayPhasePrecoding_THz_Dai2022Mar} applications.  Nevertheless, THz channel estimation in the presence of near-field beam-split (NB) remains relatively unexamined.  
	
	In this work, we introduce an NB-aware orthogonal matching pursuit (NBA-OMP) approach for wideband THz channel estimation. First, we define the NB model in terms of both user direction-of-arrival (DoA) and range parameters. Then, we design an NBA dictionary whose columns are subcarrier-dependent (SD), which spans the whole angular spectrum and transmission range up to the Fraunhofer distance.  While the degree of beam-split is proportionally known prior to the DoA/range estimation, it depends on the unknown user location. For example, consider $f_m$ and $f_c$ to be the frequencies for the, respectively, $m$-th and center subcarriers. When $\theta$ is the physical user direction, then the spatial direction corresponding to the $m$-th-subcarrier is shifted by $\frac{f_c}{f_m}\theta$. We then employ the OMP algorithm, which accounts for this deviation, thereby \textit{ipso facto} compensating the effect of NB. Numerical experiments demonstrate the effectiveness of the proposed approach that we compare with the existing THz channel-estimation-based methods~\cite{nf_OMP_Dai_Wei2021Nov,thz_channelEst_beamsplitPatternDetection_L_Dai}.
	
	The range-dependent beampattern is observed in certain far-field applications such as frequency diverse array (FDA) radars \cite{lv2022co}, which employ linear frequency offsets across the antennas to yield a range dependent beampattern. This enables a joint estimation \cite{lv2022co} of target angle and range, including in the presence of clutter \cite{lv2022clutter}. However, the FDA wavefront is not spherical. Rydberg synthetic apertures \cite{vouras2023overview,vouras2023phase} also exhibit similar complex beampatterns but do not employ large arrays. Contrary to these prior works, we focus on spherical wavefront near-field for a THz communications system that employs an extremely large array and transmits multiple subcarriers via orthogonal frequency-division multiplexing (OFDM) signaling.


	The rest of the paper is organized as follows. In the next section, we introduce the signal model for a multi-user wideband THz ultra-massive MIMO system and describe the NB effect in Section~\ref{sec:NFBP}. Section~\ref{sec:NBAOMP} presents the proposed NBA-OMP followed by numerical experiments in Section~\ref{sec:Sim}. We conclude in Section~\ref{sec:Conc}.	
	Throughout this paper,  $(\cdot)^\textsf{T}$ and $(\cdot)^{\textsf{H}}$ denote the transpose and conjugate transpose operations, respectively. $[\mathbf{a}]_k$  correspond to the   $k$-th entry of a vector $\mathbf{a}$, while $\mathbf{A}^{\dagger}$ denotes the Moore-Penrose pseudo-inverse of matrix $\mathbf{A}$. A unit matrix of size $N$ is represented by $\mathbf{I}_N$. $\Sigma(a) = \frac{\sin N\pi a }{N\sin \pi a}$ is the Dirichlet sinc function.
	
	\section{System Model}
	\label{sec:probForm}
	\vspace{-8pt}
	Consider a wideband (OFDM) THz MIMO architecture with hybrid analog/digital beamforming over $M$ subcarriers. We assume that the base station (BS) has $N$ antennas and $N_\mathrm{RF}$ radio-frequency (RF) chains to serve $K$ single-antenna users. Define the data symbols  $\mathbf{s}[m] = [s_1[m],\cdots,s_K[m]]^\textsf{T}$  for $m\in \mathcal{M} = \{1,\cdots, M\}$, which are processed via a $K\times K$ SD baseband beamformer $\mathbf{F}_\mathrm{BB}[m] = [\mathbf{f}_{\mathrm{BB}_1}[m],\cdots,\mathbf{f}_{\mathrm{BB}_K}[m]]$. To steer the generated beams toward users in the downlink, an  $N\times N_\mathrm{RF}$ SI analog beamformer $\mathbf{F}_\mathrm{RF}$ ($N_\mathrm{RF}=K<{N}$) is employed. Since the analog beamformers are realized via phase-shifters, they have a constant-modulus constraint, i.e. $|[\mathbf{F}_\mathrm{RF}]_{i,j}| = \frac{1}{\sqrt{N}}$ as $i = 1,\cdots, N_\mathrm{RF}$ and $j = 1,\cdots, N$. Then, the transmitted signal, i.e.,  $\mathbf{F}_\mathrm{RF}\mathbf{F}_\mathrm{BB}[m]\mathbf{s}[m]$, is received by the $k$-th user at the $m$-th subcarrier as 
	\begin{align}
	\label{receivedSignal}
	{y}_{k}[m] = \mathbf{h}_{k}^\textsf{T}[m]\sum_{i = 1}^{K}\mathbf{F}_\mathrm{RF}\mathbf{f}_{\mathrm{BB}_i}[m]{s}_i[m]  + {w}_k[m],
	\end{align}
	where ${w}_k[m]\in \mathbb{C}$ is the complex additive white Gaussian noise (AWGN) vector with ${w}_k[m] \sim \mathcal{CN}({0},\sigma_n^2)$.

	At THz, the reflected path components and scattering is not significant. The channel is usually modeled as the superposition of a single LoS path with a few assisting NLoS paths~\cite{ummimoTareqOverview,elbir2021JointRadarComm,thz_beamSplit}.	In addition, multipath channel models are also widely used, especially for indoor applications~\cite{teraMIMO,ummimoTareqOverview}. Hence, we consider a general scenario, wherein the $N\times 1$ channel matrix for the $k$-th user at the $m$-th subcarrier is represented by the combination of $L$ paths as~\cite{ummimoTareqOverview}
	\begin{align}
	\label{channelModel}
	\mathbf{h}_k[m]  =  
	\sqrt{\frac{N}{L}}  \sum_{l =1}^{L}   \alpha_{k,m,l} \mathbf{a}(\phi_{k,l},r_{k,l})   e^{-j2\pi\tau_{k,l} f_m },
	\end{align}
	where  $\tau_{k,l}$ represents the time delay of the $l$-th path corresponding to the array origin.  $\alpha_{k,m,l}\in\mathbb{C}$  denotes the complex path gain and the expected value of its magnitude for the indoor THz multipath model is $	\mathbb{E}\{|\alpha_{k,m,l}|^2 \} = \left(\frac{c_0}{4\pi f_m r_{k,l} } \right)^2 e^{- k_\mathrm{abs}(f_m) r_{k,l}  },$	where $c_0$ is the speed of light, $f_m$ is the $m$-th subcarrier frequency, $r_{k,l}$ represents the distance from the $k$-th user to the array origin, and  $k_\mathrm{abs}(f_m)$ is the SD medium absorption coefficient~\cite{ummimoTareqOverview,ummimoHBThzSVModel,thz_clusterBased_Yuan2022Mar}. Furthermore $f_m = f_c + \frac{B}{M}(m - 1 - \frac{M-1}{2}) $, where  $f_c$ and $B$ are carrier frequency and bandwidth, respectively.

	The high-frequency operation at THz combined with an extremely small array aperture implies that the close-proximity users are in near-field, where planar wave propagation is not valid. At ranges shorter than the Fraunhofer distance $F = \frac{2 D^2}{\lambda}$, where $D$ is the array aperture and $\lambda = \frac{c_0}{f_c}$ is the wavelength, the near-field wavefront is spherical \cite{nf_primer_Bjornson2021Oct,elbir_THZ_CE_ArrayPerturbation_Elbir2022Aug}.  For a uniform linear array (ULA), the array aperture is $D = (N-1)d$, where $d = \frac{\lambda}{2}$ is the element spacing. At THz, near-field signal model should be employed because $r_{k,l} <F$. For instance, when $f_c = 300$ GHz and $N=256$, the Fraunhofer distance is $F = 32.76$ m.

	Taking into account the spherical-wave model~\cite{nf_primer_Bjornson2021Oct,nf_Fresnel_Cui2022Nov}, we define the near-field steering vector $\mathbf{a}(\phi_{k,l},r_{k,l})\in\mathbb{C}^{N}$ corresponding to the physical DoA  $\phi_{k,l}$ and range $r_{k,l}$ as 
	\begin{align}
	\label{steeringVec1}
	\mathbf{a}(\phi_{k,l},r_{k,l}) = \frac{1}{\sqrt{N}} [e^{- \mathrm{j}2\pi \frac{d}{\lambda}r_{k,l}^{(1)} },\cdots,e^{- \mathrm{j}2\pi \frac{d}{\lambda_m}r_{k,l}^{(N)} }]^\textsf{T},
	\end{align}
	where $\phi_{k,l} = \sin \tilde{\phi}_{k,l}$ with $\tilde{\phi}_{k,l}\in [-\frac{\pi}{2},\frac{\pi}{2}]$, and  $r_{k,l}^{(n)}$ is the distance between the $k$-th user and the $n$-th antenna as
	\begin{align}
	r_{k,l}^{(n)} = \left(r_{k,l}^2  + 2(n-1)^2 d^2 - 2 r_{k,l}(n-1) d \phi_{k,l}   \right)^{\frac{1}{2}}. \label{eq:rkln}
	\end{align}
	Following the Fresnel approximation ~\cite{nf_Fresnel_Cui2022Nov,nf_primer_Bjornson2021Oct}, \eqref{eq:rkln} becomes
	\begin{align}
	\label{r_approx}
	r_{k,l}^{(n)} \approx r_{k,l}  - (n-1) d \phi_{k,l}  + (n-1)^2 d^2 \zeta_{k,l}  ,
	\end{align}	 
	where $\zeta_{k,l} = \frac{1- \phi_{k,l}^2}{2 r_{k,l}}$. Rewrite (\ref{steeringVec1}) as
	\begin{align}
	\label{steeringVectorPhy}
	\mathbf{a}(\phi_{k,l},r_{k,l}) \approx e^{- \mathrm{j}2\pi \frac{f_c}{c_0}r_{k,l}} \tilde{\mathbf{a}}(\phi_{k,l},r_{k,l}),
	\end{align} where the $n$-th element of $\tilde{\mathbf{a}}(\phi_{k,l},r_{k,l})\in \mathbb{C}^N$ is 
	\begin{align}
	\label{steeringVectorPhy2}
	[\tilde{\mathbf{a}}(\phi_{k,l},r_{k,l})]_n = e^{\mathrm{j} 2\pi \frac{f_c}{c_0}\left( (n-1)d\phi_{k,l}  - (n-1)^2 d^2 \zeta_{k,l}\right) }.
	\end{align}
	The steering vector in (\ref{steeringVectorPhy}) corresponds to the physical location $(\phi_{k,l},r_{k,l})$. This deviates to the spatial location $(\bar{\phi}_{k,m,l},\bar{r}_{k,m,l})$ in the beamspace because of the absence of SD analog beamformers. Then, the $n$-th entry of the deviated steering vector in (\ref{steeringVectorPhy2}) for the spatial location is 
	\begin{align}
	\label{steeringVectorSpa}
	&[\tilde{\mathbf{a}}(\bar{\phi}_{k,m,l},\bar{r}_{k,m,l})]_n \hspace{-3pt}= \hspace{-2pt}e^{\mathrm{j} 2\pi \frac{f_m}{c_0}\left( (n-1)d\bar{\phi}_{k,m,l}  - (n-1)^2 d^2 \bar{\zeta}_{k,m,l}\right) }.
	\end{align}

	Compared to mm-Wave frequencies, the THz bandwidths are so wide that a single-wavelength assumption for beamforming cannot hold. This leads to the \textit{split} of physical DoA/ranges $\{\phi_{k,l},r_{k,l}\}$ in the spatial domain. In order to estimate the THz channel accurately, our goal is to recover the physical DoA/range parameters from the deviated spatial spectrum via proposed signal processing techniques. 
	
	
	\section{Near-Field Beam-Split}
	\label{sec:NFBP}
	\vspace{-8pt}
	We introduce the following Theorem 1 to establish the relationship between the physical and spatial DoAs/ranges. 
	\begin{theorem}
		Denote $\mathbf{u}\in \mathbb{C}^N $ and $\mathbf{v}_m \in \mathbb{C}^N$ as the arbitrary near-field steering vectors corresponding to the physical (i.e., $\{\phi_{k,l},r_{k,l}\}$) and spatial (i.e., $\{\bar{\phi}_{k,m,l},	\bar{r}_{k,m,l}\}$) locations given in (\ref{steeringVectorPhy2}) and (\ref{steeringVectorSpa}), respectively. Then, in spatial domain at subcarrier frequency $f_m$, the array gain achieved by $\mathbf{u}^\textsf{H}\mathbf{v}_m$ is maximized and the generated beam is focused at the location $\{\bar{\phi}_{k,m,l},	\bar{r}_{k,m,l}\}$ such that  
		\begin{align}
		\label{physical_spatial_directions}
		\bar{\phi}_{k,m,l} =    \eta_m \phi_{k,l}, \hspace{5pt}
		\bar{r}_{k,m,l} =    \frac{1 - \eta_m^2 \phi_{k,l}^2}{\eta_m(1 -\phi_{k,l}^2)}r_{k,l},
		\end{align}
		where  	 $\eta_m = \frac{f_c}{f_m}$ represents the proportional deviation of DoA/ranges.
	\end{theorem}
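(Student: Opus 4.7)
The plan is to compute $\mathbf{u}^\textsf{H}\mathbf{v}_m$ by direct substitution of (\ref{steeringVectorPhy2}) and (\ref{steeringVectorSpa}) and then to identify the spatial parameters $(\bar\phi_{k,m,l},\bar r_{k,m,l})$ that maximize the resulting phase sum.

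First, I would collect exponents in the inner product and obtain
\[
\mathbf{u}^\textsf{H}\mathbf{v}_m = \frac{1}{N}\sum_{n=1}^{N} \exp\!\Big\{\mathrm{j}\frac{2\pi d}{c_0}\Big[(n-1)\big(f_m\bar\phi_{k,m,l}-f_c\phi_{k,l}\big) - (n-1)^2 d\big(f_m\bar\zeta_{k,m,l}-f_c\zeta_{k,l}\big)\Big]\Big\},
\]
so that the exponent is an affine-plus-quadratic polynomial in $n-1$ with a linear mismatch coefficient $\Delta_1 \propto f_m\bar\phi_{k,m,l}-f_c\phi_{k,l}$ and a quadratic mismatch $\Delta_2 \propto f_m\bar\zeta_{k,m,l}-f_c\zeta_{k,l}$.

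Next, I would argue that $|\mathbf{u}^\textsf{H}\mathbf{v}_m|$ attains its maximum precisely when $\Delta_1=\Delta_2=0$. When both vanish, every summand equals unity and the magnitude reaches its trivial upper bound; otherwise the rapidly rotating phase in $n$ induces destructive interference---the linear part yielding a Dirichlet sinc $\Sigma(\cdot)$ as defined in the introduction, and the quadratic part producing a Fresnel-type attenuation. Setting $\Delta_1=0$ immediately yields $\bar\phi_{k,m,l}=\eta_m\phi_{k,l}$, which is the first identity in (\ref{physical_spatial_directions}). I would then plug this together with $\zeta_{k,l}=(1-\phi_{k,l}^2)/(2r_{k,l})$ and $\bar\zeta_{k,m,l}=(1-\bar\phi_{k,m,l}^2)/(2\bar r_{k,m,l})$ into $\Delta_2=0$; a direct algebraic rearrangement then produces $\bar r_{k,m,l} = \tfrac{1-\eta_m^2\phi_{k,l}^2}{\eta_m(1-\phi_{k,l}^2)}\,r_{k,l}$, matching the second identity.

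The main obstacle, I expect, is the joint-maximization step: showing rigorously that a sum whose exponent contains a \emph{quadratic} term in $n$ really peaks at the common zero of $\Delta_1$ and $\Delta_2$, rather than at some off-axis stationary point of the underlying Fresnel-type sum. I anticipate this can be handled by a separable stationary-phase estimate leveraging the same array-aperture scaling that already underlies the Fresnel approximation (\ref{r_approx}); once this is granted, the remainder reduces to the routine algebra sketched above.
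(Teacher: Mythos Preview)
Your proposal is correct and follows essentially the same route as the paper: expand $\mathbf{u}^\textsf{H}\mathbf{v}_m$ into a phase sum, separate the linear and quadratic mismatch coefficients $f_m\bar\phi_{k,m,l}-f_c\phi_{k,l}$ and $f_m\bar\zeta_{k,m,l}-f_c\zeta_{k,l}$, argue that the gain peaks when both vanish, and then solve algebraically for (\ref{physical_spatial_directions}). The obstacle you flag is handled in the paper at the same heuristic level---the quadratic term is absorbed into an $n$-dependent $\kappa$, $\kappa_m$ and the Dirichlet-sinc identity is applied formally---so no additional machinery beyond what you sketch is used there either.
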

	
	\begin{proof}
		
		\begin{figure*}[t!]
			\centering		{\includegraphics[draft=false,width=0.85\textwidth]{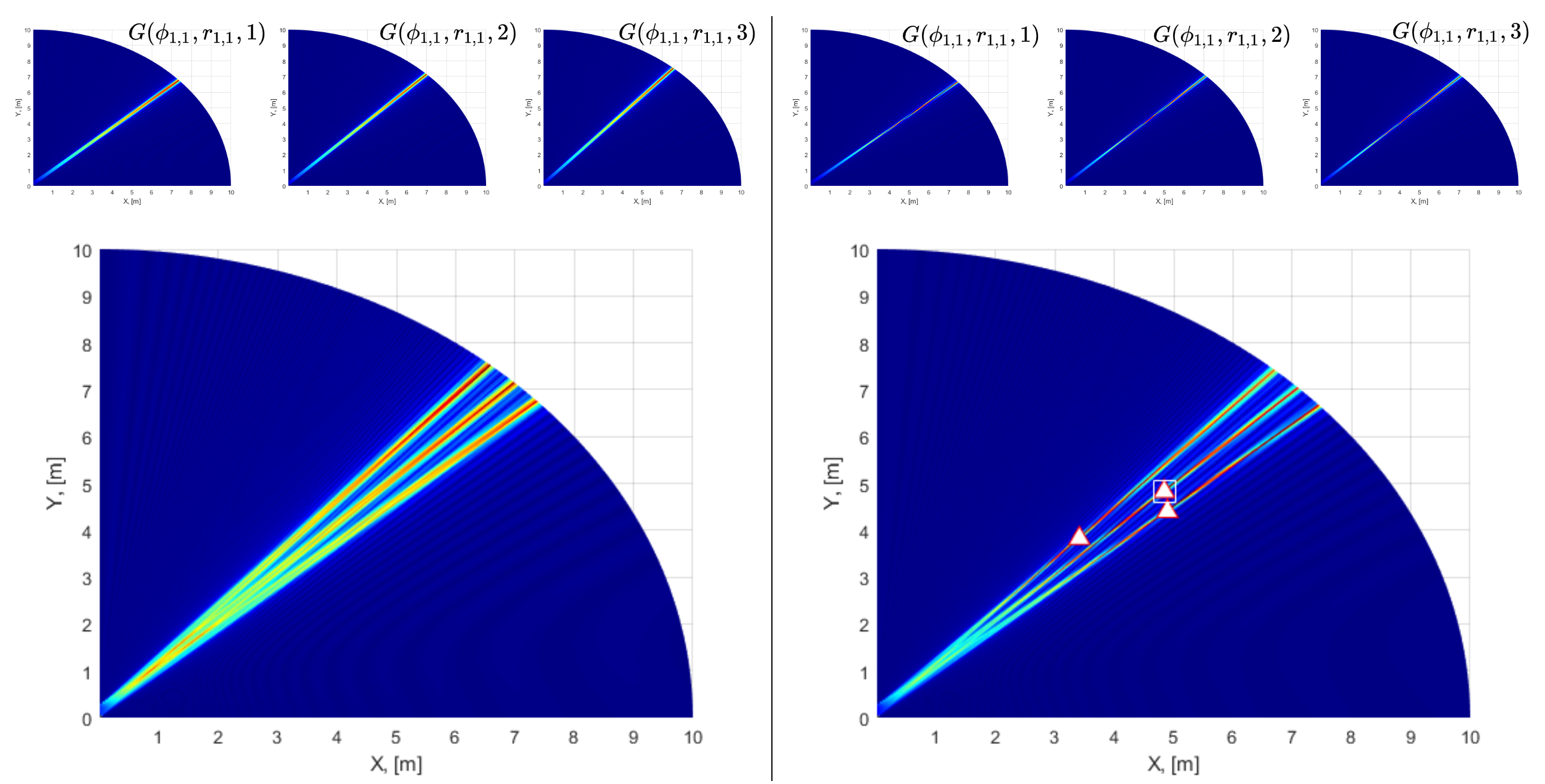}} 
			\caption{Array gains $G(\phi_{1,1},r_{1,1},{m})$ in Cartesian coordinates for a single user ($K=1$, $L=1$) located in the far-field $(45^\circ,6000\text{m})$ (left) and near-field $(45^\circ,6\text{m})$ (right), respectively. Here, $M=3$, $f_c=300$ GHz, and $B=30$ GHz. The top panel shows the gain for different subcarriers which are summed up to produce a composite array gain at the bottom for both far- and near-field cases clearly showing the beam-split. The square represents the user location while the triangles correspond to the spatial locations (where the maximum array gain is achieved) at different subcarriers. Whereas the far-field beam-split is only angular, the near-field split is across both range and angular domains.
				\vspace{-8pt}}
			\label{fig_BS}
		\end{figure*}

		
		Define the array gain achieved by $\mathbf{v}_m$ on an arbitrary user location $\{\phi_{k,l},r_{k,l}\}$ with steering vector $\mathbf{u}$ as
		\begin{align}
		&G(\phi_{k,l},r_{k,l},{m}) = \frac{|\mathbf{u}^\textsf{H} \mathbf{v}_m|^2}{N^2} \nonumber\\
		& = \frac{1}{N^2} \left|\sum_{n = 0}^{N-1} e^{\mathrm{j}\frac{2\pi}{c_0}  \left[ nd\left(f_m\bar{\phi}_{k,m,l} - f_c \phi_{k,l} \right) - n^2d^2(f_m\bar{\zeta}_{k,m,l} - f_c \zeta_{k,l} )  \right]    }   \right|^2, \nonumber \\
		&=  \frac{1}{N^2} \left|\sum_{n = 0}^{N-1} e^{\mathrm{j}\frac{2\pi n}{c_0}  (f_m\kappa_m -  f_c\kappa )    }   \right|^2, \label{arrayGain1}
		\end{align}
		where $\kappa = d({\phi}_{k,l} - n d \zeta_{k,l}) $ and $\kappa_m = d(\bar{\phi}_{k,m,l} - n d \bar{\zeta}_{k,m,l}) $,  for which $\bar{\zeta}_{k,m,l} =  \eta_m \zeta_{k,l}  = \frac{1 - \bar{\phi}_{k,m,l}^2}{2 \bar{r}_{k,m,l}}$. Then, (\ref{arrayGain1}) becomes
		\begin{align}
		G(\phi_{k,l},r_{k,l},{m})&= \frac{1}{N^2} \left| \frac{1 - e^{-\mathrm{j}2\pi N (f_m\kappa_m -  f_c\kappa)   }}{1 - e^{-\mathrm{j}2\pi (f_m\kappa_m -  f_c\kappa)  }}   \right|^2  \nonumber\\
		&= \frac{1}{N^2}\left| \frac{\sin (\pi N(f_m\kappa_m -  f_c\kappa) )}{\sin (\pi (f_m\kappa_m -  f_c\kappa))}    \right|^2 \nonumber \\
		& = |\Sigma( f_m\kappa_m -  f_c\kappa )|^2, \label{arrayGain}
		\end{align}
		where  the Dirichlet sinc function $\Sigma(\cdot)$  implies that most of the power is focused only on a small portion of the beamspace because of the power-focusing in $\Sigma(\cdot)$. This power substantially reduces across the subcarriers as $|f_m - f_c|$ increases. Furthermore, maximum array gain is achieved when $|\Sigma( a)|^2$ peaks when $a = 0$, i.e., $f_m\bar{\phi}_{k,m,l} =  f_c\phi_{k,l}$ and $f_m\bar{\zeta}_{k,m,l} = f_c\zeta_{k,l}$.  Therefore, we have $\bar{\phi}_{k,m,l} = \eta_m \phi_{k,l} $. Using $	\bar{\zeta}_{k,m,l} =  \eta_m \zeta_{k,l}$, we get $\bar{r}_{k,m,l} =   \frac{1 - \eta_m^2 \phi_{k,l}^2}{\eta_m(1 -\phi_{k,l}^2)}r_{k,l}$.
	\end{proof}

	Following (\ref{r_approx}) and (\ref{physical_spatial_directions}),  define NB in terms of DoAs and ranges of the users  as, respectively,
	\begin{align}
	\label{beamSplit2}
	\Delta(\phi_{k,l},m) &= \bar{\phi}_{k,m,l} - \phi_{k,l} = (\eta_m -1)\phi_{k,l}, 
	\end{align}
	and $\Delta(r_{k,l},m) = \bar{r}_{k,m,l} - r_{k,l} = (\eta_m -1)r_{k,l}$, i.e., 
	\begin{align}
	\Delta(r_{k,l},m) =  (\eta_m -1) \frac{1 - \eta_m^2 \phi_{k,l}^2}{\eta_m(1 -\phi_{k,l}^2)}r_{k,l}.
	\end{align}
	In Fig.~\ref{fig_BS}, the array gain is computed for both far- and near-field cases, wherein the coordinates with maximum array gain are achieved at different locations for different subcarriers because of beam-split.

	\section{NBA-OMP-Based Channel Estimation}
	\label{sec:NBAOMP}
	\vspace{-8pt}
	The key idea of the proposed NBA dictionary design is to utilize the prior knowledge of $\eta_m$ to obtain beam-split-corrected steering vectors. In other words, for an arbitrary physical DoA and range, we readily obtain the spatial DoA and ranges as $\bar{\phi} = \eta_m\phi$ and $\bar{r} = \frac{1 - \eta_m^2 \phi^2}{\eta_m(1 - \phi^2)} r$. We exploit this observation to design the BSA dictionary $\mathcal{C}_m$ composed of steering vectors $\mathbf{c}(\phi_m,r_m)\in\mathbb{C}^N$ as
	\begin{align}
	\mathcal{C}_m = \{\mathbf{c}(\phi_m,r_m) |  \phi_m\in [-\eta_m,-\eta_m], r_m \in \mathbb{R}^+  \},
	\end{align}
	where the $n$-th element of $\mathbf{c}(\phi_m,r_m)$ is 
	\begin{align}
	\label{steeringVec2c}
	[\mathbf{c}(\phi_m,r_m)]_n =  e^{\mathrm{j} 2\pi \frac{f_m}{c_0}\left(  (n-1)d\phi_m  - (n-1)^2 d^2 \zeta_m\right) },
	\end{align}
	where $\zeta_m= \frac{1 - \phi_m^2}{2r_m}  $.

	Using the BSA dictionary $\mathcal{C}_m$, instead of SI steering vectors $\mathbf{a}(\phi,r)$, the SD virtual steering vectors $\mathbf{c}(\phi_m,r_m)$ is constructed for the OMP algorithm. Once the beamspace spectra is computed via OMP, the sparse channel support corresponding to the SD spatial DoA and ranges are obtained. This readily yields the physical DoAs and ranges as $\phi = \phi_m/\eta_m$ and ${r} =    \frac{\eta_m(1 - \phi^2)}{1 - \eta_m^2 \phi^2}r_m$,  $\forall m\in \mathcal{M}$. 	
	The proposed NBA dictionary $\mathcal{C}_m$ also holds spatial orthogonality as  $\lim_{N \rightarrow +\infty} |\mathbf{c}^\textsf{H}(\phi_{m,i},r_{m,i}) \mathbf{c}(\phi_{m,j},r_{m,j})  | = 0, \forall  i\neq j$. In the next section, we present the channel estimation procedure with the proposed NBA dictionary.


	In downlink, the channel estimation stage is performed simultaneously  during channel training by all the users indexed by $k\in \mathcal{K} = \{1,\cdots, K\}$. Since the BS employs hybrid beamforming architecture, it activates only a single RF chain in each channel use to transmit the pilot signals during channel acquisition~\cite{heath2016overview}. Hence, the BS employs $P$ beamformer vectors as $\tilde{\mathbf{F}}= [\tilde{\mathbf{f}}_1,\cdots, \tilde{\mathbf{f}}_P]\in \mathbb{C}^{N\times P}$ ($|\tilde{\mathbf{f}}_p| = 1/\sqrt{N}$) to send $P$ orthogonal pilots $\tilde{\mathbf{S}}[m] = \mathrm{diag}\{\tilde{s}_1[m],\cdots, \tilde{s}_P[m]\}\in \mathbb{C}^{P\times P}$, which are collected by the $k$-th user as
	\begin{align}
	\mathbf{y}_k[m] = \tilde{\mathbf{S}}[m]\bar{\mathbf{F}}[m] \mathbf{h}_k[m] + \mathbf{w}_k[m],
	\end{align}
	where $\bar{\mathbf{F}}[m] = \tilde{\mathbf{F}}^\textsf{T}[m]\in \mathbb{C}^{P\times N} $. Assume that $\bar{\mathbf{F}}[m] = \mathbf{F}\in \mathbb{C}^{P\times N}$ and $\tilde{\mathbf{S}}[m]= \mathbf{I}_P,\forall m\in \mathcal{M}$~\cite{limitedFeedback_Alkhateeb2015Jul,dovelos_THz_CE_channelEstThz2,thz_channelEst_beamsplitPatternDetection_L_Dai}, we get
	\begin{align}
	\label{y_vector}
	\mathbf{y}_k[m] = \mathbf{F}\mathbf{h}_k[m] + \mathbf{w}_k[m].
	\end{align}
	In the sparse reconstruction framework, (\ref{y_vector}) becomes
	\begin{align}
	\mathbf{y}_k[m] = \mathbf{F} \mathbf{C}_m \mathbf{x}_k[m] + \mathbf{w}_k[m],
	\end{align}
	where $\mathbf{C}_m = [\mathbf{c}(\phi_{m,1},r_{m,1}),\cdots,\mathbf{c}(\phi_{m,Q},r_{m,Q}) ]$
	is the ${N\times Q}$ NBA dictionary matrix covering the spatial domain with $\phi_{m,q} \in [-\eta_m,\eta_m]$ ($\phi \in [-1,1]$) and $r_{m,q}\in [0, F]$ for $q=1,\cdots, Q$. $\mathbf{x}_k[m]\in\mathbb{C}^{Q}$ is an $L$-sparse vector, whose non-zero elements correspond to the set $\{ x_{k,l}[m]| x_{k,l}[m]\triangleq\sqrt{\frac{N}{L}} \alpha_{k,m,l}e^{-j2\pi \tau_{k,l}f_m} \}$. 
	
	We employ the NBA dictionary matrix $\mathbf{C}_m$ for OMP-based sparse recovery. Algorithm~\ref{alg:BSACE} summarizes the proposed NBA-OMP procedure, wherein the physical DoAs and ranges are obtained by combining the beam-split-corrected DoAs/ranges for $m\in\mathcal{M}$ (Step $4-6$). Then, the estimated THz channel $\hat{\mathbf{h}}_k[m]$ is constructed from the set of steering vectors of physical DoAs/ranges (Step $13-16$).

	\begin{algorithm}[H]
		\begin{algorithmic}[1] 
			\caption{ \bf NBA-OMP}
			\Statex {\textbf{Input:}  Dictionary $\mathbf{C}_m$, observation $\mathbf{y}_k[m]$ and $\eta_m$ \label{alg:BSACE}}
			
			\State \textbf{for} $k \in\mathcal{K}$
			\State \indent $l=1$, $\mathcal{I}_{l-1} = \emptyset$,
			$\mathbf{r}_{l-1}[m] = \mathbf{y}_k[m], \forall m\in \mathcal{M}$.
			\State \indent\textbf{while} $l\leq L$ \textbf{do}
			
			\State \indent\indent $q^\star = \argmax_q \sum_{m=1}^{M}|\mathbf{c}^\textsf{H}(\phi_{m,q},r_{m,q})\mathbf{F}^\textsf{H}\mathbf{r}_{l-1}[m] |$.
			\State \indent\indent $\mathcal{I}_{l} \gets \mathcal{I}_{l-1} \bigcup \{q^\star\}$.
			\State \indent \indent  $\hat{\phi}_{k,l} =\frac{\phi_{m,q^\star}}{\eta_m}$, $\hat{r}_{k,l} =    \frac{\eta_m(1 - \phi_{m,q^\star}^2)}{1 - \eta_m^2 \phi_{m,q^\star} } r_{m,q^\star}$.
			\State \indent \indent $\hat{\Delta}(\phi_{k,l},m) = (\eta_m-1) \hat{\phi}_{k,l}$, $\forall m\in \mathcal{M}$.
			\State \indent \indent $\hat{\Delta}(r_{k,l},m) =(\eta_m -1) \frac{1 - \eta_m^2 \hat{\phi}_{k,l}^2}{\eta_m(1 -\hat{\phi}_{k,l}^2)}\hat{r}_{k,l} $, $\forall m\in \mathcal{M}$.
			\State \indent \indent $\boldsymbol{\Psi}_m(\mathcal{I}_l) \gets \mathbf{F}\mathbf{C}_m(\mathcal{I}_1)$.
			\State \indent \indent $\mathbf{r}_{l}[m] \gets \left( \mathbf{I}_P -  \boldsymbol{\Psi}_m(\mathcal{I}_l) \boldsymbol{\Psi}_m^\dagger(\mathcal{I}_l) \right) \mathbf{y}_k[m]$.
			\State \indent\indent $l \gets l + 1$.
			\State \indent\textbf{end while}
			\State \indent  $\boldsymbol{\Xi}_k = [\mathbf{a}(\hat{\phi}_{k,1},\hat{r}_{k,1}), \cdots, \mathbf{a}(\hat{\phi}_{k,L}, \hat{r}_{k,L})]$.
			\State \indent \textbf{for} $m\in \mathcal{M}$
			\State \indent\indent $\hat{\mathbf{h}}_k[m] = \boldsymbol{\Xi}_k\hat{\mathbf{u}}_k[m]$, $\hat{\mathbf{u}}_k[m] = \boldsymbol{\Psi}_m^\dagger(\mathcal{I}_{l-1}) \mathbf{y}_k[m]$.
			\State \indent\textbf{end for}
			\State \textbf{end for}
			\Statex \textbf{Return:}   $\hat{\mathbf{h}}_k[m]$, $\hat{\Delta}(\phi_{k,l},m)$ and $\hat{\Delta}(r_{k,l},m)$.
		\end{algorithmic} 
	\end{algorithm}
	
	\textit{Complexity and Training Overhead:} The computational complexity of NBA-OMP is same as the traditional OMP~\cite{heath2016overview}. It is primarily because of the matrix multiplications in step $4$ $(O(QMNP(NP + P)))$, step $9$ $(O(PNL\bar{L}))$, step $10$ $(O(P^2L\bar{L} + P^2   ))$ and step $15$  $(O(L(P + N)))$, where $\bar{L}=(L+1)/2$. Hence, the overall complexity is $O( QMNP(NP + P) + (PL\bar{L} + 1)(N + P) )$. 	The channel training overhead of the NBA-OMP  requires only $P$ ($8$ times lower, see Section~\ref{sec:Sim}) channel usage for pilot signaling than the traditional approaches (least-squares (LS) and minimum mean-squared-error (MMSE) estimation) that need at least $N$ times channel usage.
	
	\vspace{-8pt}
	\section{Numerical Experiments}
	\label{sec:Sim}
	\vspace{-8pt}
	We evaluate the performance of our NBA-OMP approach, in comparison with the state-of-the-art channel estimation techniques, e.g., far-field OMP (FF-OMP)~\cite{beamSquintRodriguezFernandez2018Dec}, near-field OMP (NF-OMP)~\cite{nf_OMP_Dai_Wei2021Nov}, beam-split pattern detection (BSPD)~\cite{thz_channelEst_beamsplitPatternDetection_L_Dai} as well as LS and MMSE. Throughout the experiments, unless stated otherwise, the signal model is constructed with $f_c=300$ GHz, $B=30$ GHz, $M=128$, $K=N_\mathrm{RF} = 8$, $L=3$, $P=8$ and $N=256$, for which $F = 32.76$ m. The NBA dictionary matrix is constructed with $Q=10N$, and the user directions and ranges are selected as $\phi_{k,l}\in \mathrm{unif}[-1,1]$, $r_{k,l} \in \mathrm{unif}[5,30]$ m, respectively. 
	
	Fig.~\ref{fig_NMSE_SNR} shows the channel estimation normalized MSE (NMSE) performance with respect to the signal-to-noise-ratio (SNR). The proposed NBA-OMP outperforms the competing methods while closely following the MMSE. The superior performance of NBA-OMP is attributed to accurate compensation of beam-split in both direction and range parameters via the NBA dictionary. 
	On the other hand, the remaining methods fail to exhibit low error in the high SNR regime. 
	Furthermore, since the OMP is performed in the digital domain, the proposed approach does not require additional hardware, e.g., time-delayers as in~\cite{dovelos_THz_CE_channelEstThz2,delayPhasePrecoding_THz_Dai2022Mar}, to realize the SD dictionary matrices steering vectors. 
	Fig.~\ref{fig_NMSE_BW} compares the NMSE performance with respect to the bandwidth $B\in [0,100]$ GHz. Our NBA-OMP effectively compensate the impact of beam-split for a large portion of the bandwidth up to $B<70$ GHz. 

	\begin{figure}[t]
		\centering		{\includegraphics[draft=false,width=.85\columnwidth]{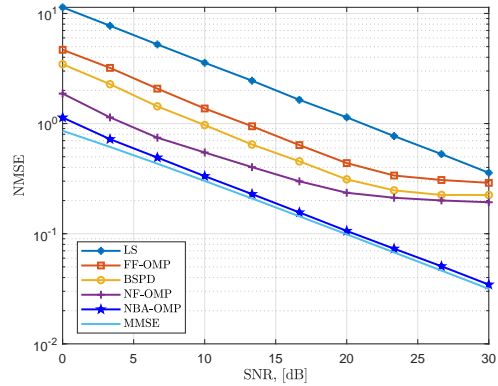}}  
		\caption{Near-field THz wideband channel estimation NMSE versus SNR. $N=256$, $f_c=300$ GHz, $M=128$ and $B=30$ GHz. \vspace{-8pt}
		}
		\label{fig_NMSE_SNR}
	\end{figure}

	\begin{figure}[t]
		\centering		{\includegraphics[draft=false,width=.85\columnwidth]{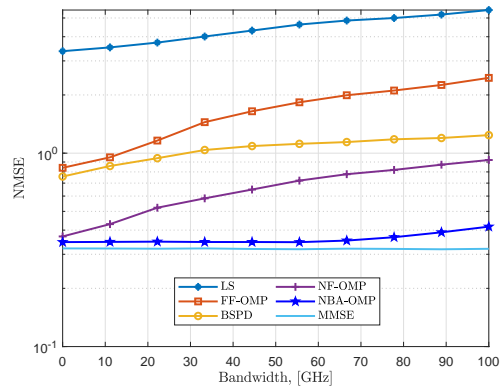}}  
		\caption{Near-field THz wideband channel estimation NMSE versus bandwidth when  $\mathrm{SNR}=10 $ dB.\vspace{-8pt}
		}
		\label{fig_NMSE_BW}
	\end{figure}

	\vspace{-10pt}
	\section{Summary}
	\label{sec:Conc}
	\vspace{-8pt}
	We introduced an NBA-OMP approach to offset the impact of beam-split for near-field channel estimation. The proposed approach is especially beneficial at THz, wherein the signal wavefront becomes spherical for the close-proximity users and the far-field assumption is no longer valid. We evaluated the performance of our NBA-OMP channel estimation 
	and demonstrated that it is very close to the MMSE performance.

	
	\balance
	\bibliographystyle{IEEEtran}
	\bibliography{IEEEabrv,references_118}

\end{document}